
\documentclass[11pt,a4paper]{article}
\usepackage[margin=25mm]{geometry}
\usepackage{localeng}
\newtheorem{proposition}{Proposition}
\theoremstyle{remark}
\newtheorem*{definition}{Definition}

\usepackage[margin=25mm]{geometry}
\DeclareMathOperator{\KS}{\mathrm{C}\mskip 1 mu}
\DeclareMathOperator{\poly}{\mathrm{poly}\mskip 1 mu}
\newcommand{\cnd}{\mskip 1 mu | \mskip 1 mu}

\title{Individual codewords}
\date{}
\author{Alexander Shen}

\begin{document}

\maketitle

\begin{abstract}
Algorithmic information theory translates statements about classes of objects into statements about individual objects; it defines individual random sequences, effective Hausdorff dimension of individual points, amount of information in individual strings, etc. We observe that a similar translation is possible for list-decodable codes.
\end{abstract}

Let $\mathbb{B}^n$ be the set of $n$-bit strings. The Hamming distance $d(x,y)$ between two strings $x,y\in\mathbb{B}^n$ is the number of coordinates where strings $x$ and $y$  differ.  A set $C\subset \mathbb{B}^n$ is an \emph{$(e,L)$-list-decodable code} if for every $y\in \mathbb{B}^n$ there are at most $L$ strings $x$ such that $x\in C$ and $d(x,y)\le e$. In this definition $e$ and $L$ are two numbers ($e$ is the number of allowed errors and $L$ is the length of the list of possible decoded strings). If we know that some bit string $x$ from $C$ (a \emph{codeword} of the code~$C$)  is transmitted through a channel that changes at most $e$ positions, and see the changed string $y$,  we can find (at most) $L$ possible candidates for $x$.

Being a list decodable code is a property of a code as a set. It makes no sense to ask whether an individual string is a codeword (an element of some list decodable code) or not. Indeed, Boolean cube has many automorphisms: \texttt{xor} with a fixed string preserves Hamming distance and can map any string to any other one. 

It turns out that we can define individual codewords in the framework of the algorithmic information theory.

\begin{definition}
A string $x\in \mathbb{B}^n$ is an \emph{$(e,L)$-codeword} if 
\(
\KS(x\cnd y)\le \log L
\)
for every $y\in\mathbb{B}^n$ such that $d(x,y)\le e$.
\end{definition}
 
Here $\KS(x\cnd y)$ stands for the plain Kolmogorov complexity of $x$ given $y$ (see, e.g.,\cite{suv}). We also use $\KS(x)$ for plain (unconditional) complexity of $x$, and apply $\KS$ to arbitrary finite objects (in particular, codes that are finite sets of strings).
\smallskip

The two-sided connection between this notion and the standard definition of list-decoding codes is provided by the following two (obvious) propositions.

\begin{proposition}\label{prop:atoc}
For every $n$, $e$, and $L$,  the set of all $(e,L)$-codewords is an $(e,O(L))$-list-decodable code. 
\end{proposition}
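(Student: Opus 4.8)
The plan is to prove this by a direct counting argument, with no need for any property of the codewords beyond the defining inequality. Let $C$ denote the set of all $(e,L)$-codewords in $\mathbb{B}^n$. To show that $C$ is an $(e,O(L))$-list-decodable code, fix an arbitrary $y\in\mathbb{B}^n$ and bound the number of elements of $C$ lying within Hamming distance $e$ of $y$. The point is that for each such $x$, the defining condition, applied to this particular $y$, yields $\KS(x\cnd y)\le\log L$. Thus every codeword near $y$ is a low-complexity object relative to $y$.

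The second ingredient is the standard counting bound for Kolmogorov complexity: for any fixed condition $y$, the number of strings $x$ with $\KS(x\cnd y)\le m$ is at most the number of programs of length at most $m$, which is $2^{m+1}-1<2^{m+1}$. Taking $m=\lfloor\log L\rfloor$ gives fewer than $2\cdot 2^{\log L}=2L$ such strings.

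Combining the two observations, the set $\{x\in C: d(x,y)\le e\}$ is contained in $\{x:\KS(x\cnd y)\le\log L\}$, which has fewer than $2L$ elements. Since $y$ was arbitrary, every $y\in\mathbb{B}^n$ has at most $2L=O(L)$ codewords within distance $e$, and this is precisely the defining property of an $(e,O(L))$-list-decodable code.

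I expect no genuine obstacle here; this is why the proposition is labelled ``obvious.'' The only place calling for minor care is the bookkeeping of the constant, namely whether the bound is exactly $2L$ and how the possible nonintegrality of $\log L$ interacts with the floor in the counting bound. Any such slack, however, is harmless because it is absorbed into the $O(\cdot)$ notation, so the constant need not be optimized.
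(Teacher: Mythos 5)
Your proof is correct and follows essentially the same route as the paper's: fix $y$, note that every $(e,L)$-codeword within distance $e$ of $y$ satisfies $\KS(x\cnd y)\le\log L$, and invoke the standard counting bound that at most $O(2^{\log L})=O(L)$ strings can have conditional complexity below $\log L$. Your extra care with the floor of $\log L$ is harmless bookkeeping that the paper absorbs into the $O(1)$ in the exponent.
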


\begin{proof}
Indeed, for a given $y$ there are at most $2^{\log L+O(1)}=O(L)$ strings such that $\KS(x\cnd y)\le \log L$, and all the $(e,L)$-codewords such that $d(x,y)\le e$, are among them.
\end{proof}

\begin{proposition}\label{prop:ctoa}
Let $C\subset \mathbb{B}^n$ be an $(e,L)$-list-decodable code. Then every element of $C$ is an\\ $(e,L2^{\KS(C)+O(\log n)})$-codeword.
\end{proposition}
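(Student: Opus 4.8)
The plan is to prove directly that for every $x\in C$ and every $y$ with $d(x,y)\le e$ one has $\KS(x\cnd y)\le \log L+\KS(C)+O(\log n)$; exponentiating this bound makes $x$ an $(e,L2^{\KS(C)+O(\log n)})$-codeword, which is exactly the claim. So the whole task reduces to building a short program that outputs $x$ on input $y$.

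The description I would build rests on the defining property of list-decodability. Given the code $C$ and the error bound $e$, a decoder can compute the ball list $\mathrm{List}(y)=\{x'\in C: d(x',y)\le e\}$, and by hypothesis this list has at most $L$ elements; since our $x$ belongs to it, $x$ is pinned down by its index in the list. Thus I would first establish the conditional bound $\KS(x\cnd y,C,e)\le \log L+O(1)$: given $y,C,e$ the decoder reconstructs $\mathrm{List}(y)$, reads its cardinality $s\le L$, and the index costs only $\lceil \log s\rceil\le \log L+1$ bits.

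Then I would strip the conditioning on $C$ and $e$ using the standard concatenation inequality for plain complexity. Let $p$ be a shortest program computing $x$ from $(y,C,e)$, and let $r$ be a shortest program for the pair $(C,e)$. The program for $x$ from $y$ is $\langle|p|\rangle\,p\,r$, where $\langle|p|\rangle$ is a self-delimiting encoding of the length $|p|$: the decoder peels off $|p|$, then reads $p$, treats the remainder as $r$, reconstructs $(C,e)$, and runs $p$ on $(y,C,e)$. Its length is $|p|+|r|+O(\log|p|)\le(\log L+O(1))+\KS(C,e)+O(\log\log L)$. Finally $\KS(C,e)\le \KS(C)+O(\log n)$ (prepend a self-delimiting copy of $e\le n$ to a shortest program for $C$), and $\log\log L\le \log n$ since $L\le 2^n$; collecting terms yields the desired bound.

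The delicate point is the bookkeeping of these overheads rather than any genuine difficulty. Two choices matter. First, the index must be decoded by having the decoder recompute the actual list size $s$, so that it costs $\log L$ rather than $2\log L$ bits, which a self-delimiting index would force. Second, the program $r$ for $C$ must be placed \emph{last} and left undelimited, so that we never pay the prohibitive $O(\log\KS(C))$ bits that a self-delimiting length field for $r$ would require --- recall $\KS(C)$ can be as large as $\Theta(2^n)$. Once the pieces are ordered so that each length is either announced in advance ($|p|$, via a cheap self-delimiting field), recomputed by the decoder ($s$), or simply ``the rest of the input'' ($r$), all overheads collapse to $O(\log n)$ and the proposition follows.
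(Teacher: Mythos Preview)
Your proof is correct and follows essentially the same route as the paper: specify $x$ by its index in the short list $\{x'\in C: d(x',y)\le e\}$, pay $\log L$ for the index plus $\KS(C)$ for the code, and absorb the pairing overhead $O(\log\log L)$ into $O(\log n)$; your extra care about encoding $e$ and about placing the program for $C$ last (to avoid an $O(\log\KS(C))$ penalty) is exactly right and matches the paper's use of $\KS(u,v)\le\KS(u)+\KS(v)+O(\log\KS(v))$ with $v$ the index. The only unjustified clause is ``since $L\le 2^n$'': nothing in the hypothesis forces this, and the paper instead disposes of large $L$ by observing that once $\log L\gg n$ the claimed bound exceeds $n$ and is vacuous because $\KS(x\cnd y)\le n+O(1)$ always.
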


\begin{proof}
Consider an arbitrary string $y\in \mathbb{B}^n$. The assumption guarantees that there are at most $L$ strings $x\in C$ such that $d(x,y)\le e$. Each of them can be specified (in addition to condition $y$ and code $C$) by the ordinal number in this list of length $L$, i.e., by $O(\log L)$ bits. Therefore, if $d(x,y)\le e$, the complexity $\KS(x\cnd y)$ is bounded by $\KS(C)+\log L + O(\log \log L)$. The last term $O(\log\log  L)$ is the encoding overhead for the pair $(C,\text{ordinal number})$: recall that $\KS(u,v)\le \KS(u)+\KS(v)+O(\log \KS(v))$. Note also that $O(\log\log L)$ is $O(\log n)$ unless $\log L$ is much bigger than $n$, and the statement is trivial in the latter case. 
\end{proof}

These propositions show that the set of $(e,L)$-codewords is a maximal list-decodable code (that includes every other list-decoding code with some degradation in the parameters, depending on the other code's complexity). Note that if there exists a list decoding code with some parameters, there exists also a \emph{simple} list decoding code with almost the same parameters. This code can be found by searching all subsets; the only thing we have to know are the parameters. If we agree to consider only values of $L$ that are powers of two (losing factor $2$), the complexity of the $(e,L)$-list-decodable code of maximal size is $O(\log n)$. In this way we get the following result:

\begin{proposition}\label{prop:ctoaexist}
Assume that there exists an $(e,L)$-decodable code of cardinality at least $2^k$. Then there exists an $(e, L\cdot\poly(n))$-codeword of complexity at least~$k$.
\end{proposition}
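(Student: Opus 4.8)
The plan is to combine Proposition~\ref{prop:ctoa}, which converts a code into codewords, with two elementary counting observations: one to keep the complexity of the code small, and one to guarantee that a high-complexity element exists inside it.

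First I would discard the hypothesized code and work instead with the canonical maximal-size code. As the preceding paragraph records, once we agree that $L$ is a power of two, there is an $(e,L)$-list-decodable code $C^{*}$ of maximal cardinality that can be produced by brute-force search over all subsets of $\mathbb{B}^{n}$; it is determined by $n$, $e$, and $L$ alone, so $\KS(C^{*})=O(\log n)$. Since the hypothesis supplies \emph{some} $(e,L)$-code of cardinality at least $2^{k}$, maximality of $C^{*}$ gives $|C^{*}|\ge 2^{k}$ as well.

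Next I would feed $C^{*}$ into Proposition~\ref{prop:ctoa}: every element of $C^{*}$ is an $(e,\,L\cdot 2^{\KS(C^{*})+O(\log n)})$-codeword. Because $\KS(C^{*})=O(\log n)$, the factor $2^{\KS(C^{*})+O(\log n)}$ collapses to $2^{O(\log n)}=\poly(n)$, so every element of $C^{*}$ is in fact an $(e,\,L\cdot\poly(n))$-codeword. It then remains only to locate one element of $C^{*}$ that is genuinely complex. Here a crude counting bound suffices: the number of strings $x$ with $\KS(x)<k$ is less than $2^{k}$, since there are fewer than $2^{k}$ programs of length below $k$. As $|C^{*}|\ge 2^{k}$, the pigeonhole principle yields some $x\in C^{*}$ with $\KS(x)\ge k$. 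This $x$ is simultaneously an $(e,\,L\cdot\poly(n))$-codeword and of complexity at least $k$, which is exactly the desired object.

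The one delicate point, and the only place where care is needed, is the control of $\KS(C^{*})$. An arbitrary maximal code will not do, since a generic maximal code can have complexity close to $2^{n}$ and would inflate the $2^{\KS(C^{*})}$ factor far beyond $\poly(n)$. The argument therefore depends on using the \emph{specific} code found by exhaustive search, together with the restriction of $L$ to powers of two so that the triple of parameters costs only $O(\log n)$ bits to describe. Once this is fixed, the two counting steps are entirely routine.
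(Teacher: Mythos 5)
Your proof is correct and follows essentially the same route as the paper's: replace the hypothesized code by a canonically computable low-complexity code (the paper searches for the first $(e,L')$-code of size $\ge 2^k$ rather than the maximal one, an immaterial difference), apply Proposition~\ref{prop:ctoa}, and conclude by counting that a set of $2^k$ strings must contain one of complexity at least $k$. Your closing remark correctly identifies the only delicate point, the need to control $\KS(C^{*})$ via the power-of-two convention on $L$.
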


\begin{proof}
Let $L'$ be the maximal power of $2$ that does not exceed $L$. Knowing $\log L'$, $e$, $n$ and $k$ (all of them can be specified by $O(\log n)$ bits), we search for $(e,L')$-list-decodable code of size at least $2^k$. The first code with this property has complexity $O(\log n)$, so Proposition~\ref{prop:ctoa} guarantees that all its codewords are $(e,2^{\log L'+O(\log n)})$-codewords, and $\log L'=\log L+O(1)$. Since there are at least $2^k$ codewords in this code, one of them has complexity at least $k$. 
\end{proof}

The correspondence works in both directions:

\begin{proposition}\label{prop:atocexist}
Assume that there exist an $(e,L)$-codeword of complexity at least $k$. Then there exists an $(e, O(L))$-decodable code of cardinality at least $2^{k-O(\log n)}$.
\end{proposition}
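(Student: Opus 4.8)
The plan is to take for the desired code the set $C$ of \emph{all} $(e,L)$-codewords. By Proposition~\ref{prop:atoc} this $C$ is already an $(e,O(L))$-list-decodable code, so the whole statement reduces to a single cardinality bound: it remains to show that $C$ contains at least $2^{k-O(\log n)}$ strings. The hypothesis hands us one string $x\in C$ with $\KS(x)\ge k$, and the task is to convert this single high-complexity element into a lower bound on $|C|$ via the usual ``a simple set containing a complex element must be large'' principle.

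First I would establish the step on which everything hinges: although membership in $C$ is \emph{defined} through the uncomputable function $\KS$, the set $C$ is nevertheless computably enumerable, uniformly given the parameters $n$, $e$, and the threshold $\log L$. The point is that $x$ is an $(e,L)$-codeword exactly when $\KS(x\cnd y)\le\log L$ holds simultaneously for all $y$ in the Hamming ball of radius $e$ around $x$. This ball is finite, each individual condition $\KS(x\cnd y)\le\log L$ is semidecidable (dovetail over all programs of length $\le\log L$ and wait for one that produces $x$ from $y$), and a finite conjunction of semidecidable conditions is again semidecidable. Running these searches in parallel over all $x\in\mathbb{B}^n$ therefore enumerates $C$.

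Once enumerability is in hand, the conclusion follows from the standard index-encoding estimate. Fix an algorithm that, from $n$, $e$ and $\log L$, enumerates $C$; describing these parameters costs $O(\log n)$ bits (the threshold may be capped at $n+O(\log n)$, since conditional complexities never exceed that, which also disposes of the degenerate regime where $L$ is enormous and $C=\mathbb{B}^n$). Our element $x$ appears at some position $i\le|C|$ in this enumeration, so it can be reconstructed from the parameters together with the index $i$, giving $\KS(x)\le \log i+O(\log n)\le \log|C|+O(\log n)$. Since $\KS(x)\ge k$ and $\log\log|C|\le\log n$, rearranging yields $\log|C|\ge k-O(\log n)$, i.e.\ $|C|\ge 2^{k-O(\log n)}$, as required.

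The one genuinely non-obvious step is the enumerability of $C$: it is tempting to believe that a set defined through Kolmogorov complexity cannot be simple, but here the universal quantifier ranges only over a finite ball, which keeps the property on the c.e.\ side. Everything after that is routine, the only mild care being to bundle the parameters and the index $i$ into a single description with overhead $O(\log n)$.
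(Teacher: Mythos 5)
Your proof is correct and follows essentially the same route as the paper's: observe that the set of all $(e,L)$-codewords is computably enumerable given the parameters (the universal quantifier ranges over a finite Hamming ball and each condition $\KS(x\cnd y)\le\log L$ is semidecidable), deduce from the presence of a complexity-$k$ element that the set has at least $2^{k-O(\log n)}$ members, and invoke Proposition~\ref{prop:atoc} for list-decodability. Your capping of the threshold at $n+O(\log n)$ plays the same role as the paper's reduction to $L$ a power of two, namely making the parameters describable in $O(\log n)$ bits.
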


\begin{proof}
The property of being a $(e,L)$-codeword is (computably) enumerable given $e$ and $L$: for a given $x$, we check all the strings $y$ such that $d(x,y)\le e$ and wait until we find that $\KS(x\cnd y)\le \log L$ for all of them. We may again assume without loss of generality that $L$ is a power of $2$ and therefore the parameters have complexity $O(\log n)$. We know that there exist an $(e,L)$-codeword of complexity at least $k$, so there are at least $2^{k-O(\log n)}$ different $(e,L)$-codewords (otherwise each of them could be specified by its ordinal number in the enumeration plus $O(\log n)$ bits for $e$ and $L$). According to Proposition~\ref{prop:atoc}, they form an $(e,O(L))$-list decodable code.
\end{proof}

Propositions~\ref{prop:ctoaexist} and~\ref{prop:atocexist} show that the question ``how large can be an $(e,L)$-decodable code made of $n$-bit strings'' can be reformulated as ``how complex can be an $(e,L)$-codeword of length $n$'', with a small change in parameters.  This does not help much, since the answer to this question is well known (see, e.g., the historical account in~\cite{guruswami} and~\cite{zametki} for the proofs). Moreover, the argument uses random codes and it is not clear whether one can prove the existence of individual codewords of high complexity without going through combinatorial translations. Still this translation gives us one more example of correspondence between combinatorial and complexity languages.

\end{document}